\title{Interval-Memoized Backtracking on ZDDs for Fast Enumeration of All Lower Cost Solutions} 
\titlerunning{Interval-Memoized Backtracking on ZDDs for Fast Enumeration of All Lower Cost Solutions} 
\author{Shin-ichi Minato}{Kyoto University, Japan }{}{}{}
\author{Mutsunori Banbara}{Nagoya University, Japan }{}{}{}
\author{Takashi Horiyama}{Hokkaido University, Japan }{}{}{}
\author{Jun Kawahara}{Kyoto University, Japan }{}{}{}
\author{Ichigaku Takigawa}{RIKEN Center for AIP, Japan / Hokkaido University, Japan}{}{}{}
\author{Yutaro Yamaguchi}{Osaka University, Japan }{}{}{}
\authorrunning{S.~Minato, M.~Banbara, T.~Horiyama, J.~Kawahara, I.~Takigawa, and Y.~Yamaguchi} 
\keywords{decision diagram, ZDD, enumeration, optimization, combinatorial problem} 
\newtheorem{theo}{Theorem}[section]
\begin{document}

\maketitle

\begin{abstract}
In this paper, we propose a fast method for exactly enumerating a very large number of all lower cost solutions for various combinatorial problems. Our method is based on backtracking for a given decision diagram which represents all the feasible solutions. The main idea is to memoize the intervals of cost bounds to avoid duplicate search in the backtracking process. In contrast to usual pseudo-polynomial-time dynamic programming approaches, the computation time of our method does not directly depend on the total cost values, but is bounded by the input and output size of the decision diagrams. Therefore, it can be much faster if the cost values are large but the input/output decision diagrams are well-compressed. We demonstrate its practical efficiency by comparing our method to current available enumeration methods: for nontrivial size instances of the Hamiltonian path problem, our method succeeded in exactly enumerating billions of all lower cost solutions in a few seconds, which was hundred or much more times faster. Our method can be regarded as a novel search algorithm which integrates the two classical techniques, branch-and-bound and dynamic programming. This method would have many applications in various fields, including operations research, data mining, statistical testing, hardware/software system design, etc.
\end{abstract}

\baselineskip = 4.4mm
\section{Introduction}
\label{sec:Introduction}
In many real-life situations, we are often faced with combinatorial optimization problems under linear cost functions, to find an optimal solution having the minimum (or maximum) total cost of items satisfying some constraints. This is one of the most popular types of classical problems, including shortest path, minimum spanning tree, traveling salesperson, knapsack, etc. For such problems, some powerful software tools, such as ILP solvers \cite{IBM-CPLEX2019,gurobi2020}, SAT-based CSP solvers \cite{Tamura09:Sugar,Sugar2018}, and BDD-based solvers \cite{Nishino2015:BDD-DPsearch} have been developed and successfully utilized for many practical problems.

Unfortunately, the optimal solution is not always the best decision in real-life situations since it is sometimes difficult to correctly model the problems which we really want to solve. Recent progress of computing resources enables us to not only generate one optimal solution but also to enumerate all feasible solutions, and techniques for various kinds of enumeration problems are discussed \cite{wasa2016enumeration}.  Particularly, efficient enumeration methods based on {\em decision diagrams} \cite{Knuth2009:BDD-taocp,Inoue2014:STTT,Kawahara2017:IEICE} have been developed, and for some kinds of practical applications, we can generate a very large number of all feasible solutions in a practical time and space. 

In this paper, we formulate the problems as {\em cost-bounded combinatorial problems}, such that the total cost of items should not be more (or less) than a given cost bound. We propose a fast method for exactly enumerating a very large number of all lower cost solutions. This technique will lead some new prospective applications. For example, our method can be used for calculating the exact ranking for any given solution, and it enables us to compute $p$-values used in statistical assessment. Another good application will be ``quality-controlled'' random sampling, which will be useful for generating a large amount of training data in machine learning. 

Our proposed algorithm is based on backtracking for a given {\em decision diagram} which implicitly represents all feasible solutions. The main idea is to memoize the intervals of cost bounds to avoid duplicate search in the backtracking process. On this type of enumeration problems, we know a classical method with dynamic programming which requires a pseudo-polynomial time with the total cost values; however, the cost values often become large in practical applications when dealing with path lengths, financial incomes, national population, etc., and in such cases it is not easy to execute. In contrast, the computation time of our method does not directly depend on the total cost values, but is bounded by the input and output size of the decision diagrams. Therefore, it can be much faster than the existing methods if the cost values are large but the input/output of the decision diagrams are well-compressed. Our experimental results show that, for several nontrivial size instances of the Hamiltonian path problem, we succeeded in exactly enumerating billions of all lower cost solutions in a few seconds, which is hundred or more times faster than existing methods. Our method can be regarded as a novel search algorithm which integrates the two classical techniques, branch-and-bound and dynamic programming.

In the rest of this paper, we briefly review decision diagram-based techniques for solving combinatorial problems in Section 2, and then discuss the motivation and technical difficulties of enumeration problems in Section 3. We present our algorithm and discuss the time and space complexity in Section 4. Experimental evaluations are shown in Section 5, followed by related work and summary. 

\section{Preliminaries}
\subsection{Cost-Bounded Combinatorial Problems}
In this paper, we define a set of $n$ items $I = \{1, 2, \ldots, n \}$, and we call any subset $X \subseteq I$ an {\em item combination}. For a given {\em constraint function} $f : 2^I \rightarrow \{0, 1\}$, the set of {\em feasible solutions} for $f$ is denoted by $\mathcal{S}_f = \{X \subseteq I  \mid f(X) = 1\}$. For each item $i \ \ (1\leq i \leq n)$, an integer-valued cost $c_i \in \mathbb{Z}$ is defined. The cost of $X$ is defined as $\displaystyle \mathrm{Cost}(X) = \sum_{i \in X} c_i$. A combinatorial optimization problem can be formulated as a problem to find a feasible solution $X^* \in \mathcal{S}_f$ which minimizes (or maximizes) $\mathrm{Cost}(X^*)$.

In this paper, we discuss the enumeration problems derived from the optimization problems. Specifying a cost bound 
$b \in \mathbb{Z}$, we define a problem to find a solution $X \in \mathcal{S}_f$ satisfying $\mathrm{Cost}(X) \leq b$. We call such problems {\em cost-bounded combinatorial problems} for $(f, b)$. We consider a fast method for exactly enumerating all solutions of this type of problems.

\begin{figure}[t]
\begin{center}
\includegraphics[width=75mm]{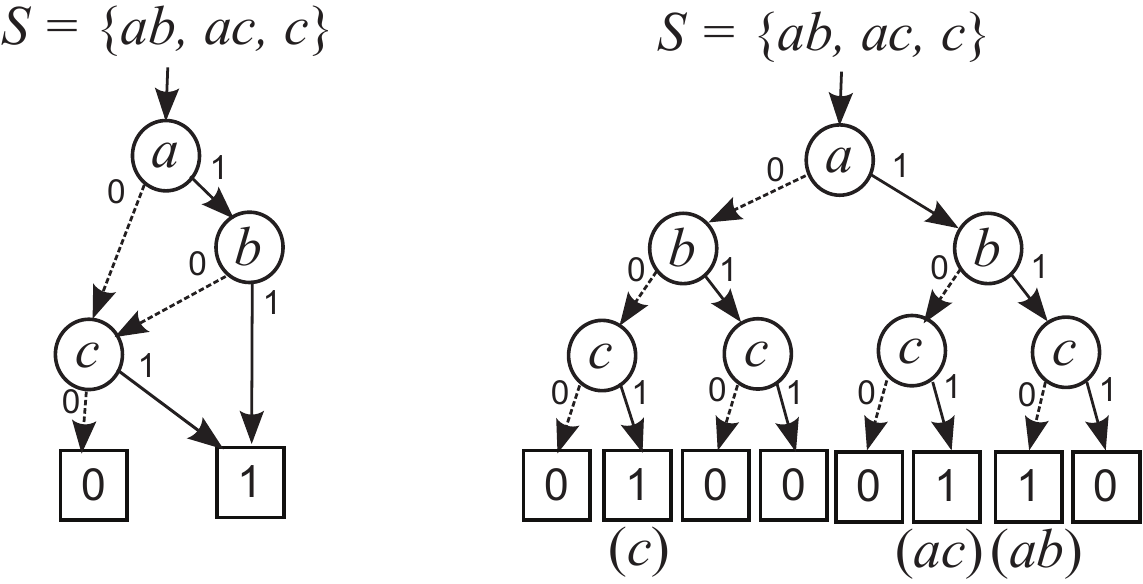}\\
\hspace*{9mm} (a) ZDD \hspace{18mm} (b) Binary decision tree \hspace*{8mm}
\caption{ZDD and binary decision tree.}
\label{fig:ZDD-Tree}
\vspace{5mm}
\includegraphics[width=95mm]{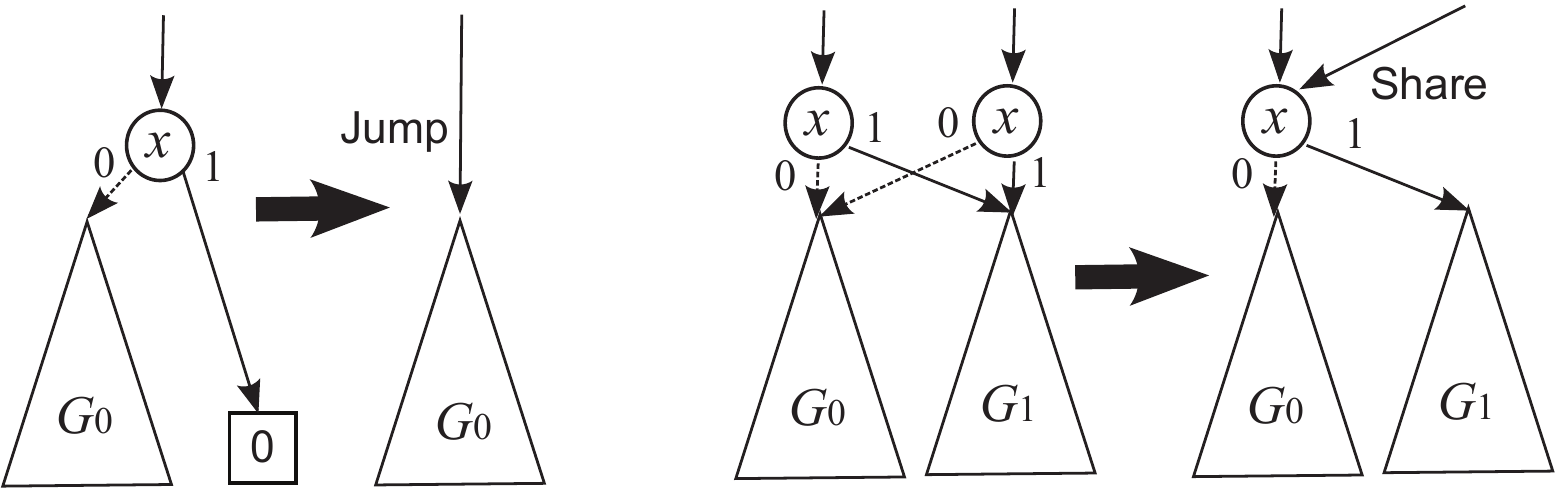}\\
\hspace*{2mm} (a) Node deletion \hspace{23mm} (b) Node sharing \hspace*{10mm}
\caption{ZDD reduction rules}
\label{fig:ZDDrules}
\vspace{-5mm}
\end{center}
\end{figure}

\subsection{ZDDs and Combinatorial Optimization}
A {\em zero-suppressed binary decision diagram} (ZDD) \cite{Minato93} is a graphical representation for a set of item combinations\footnote{In this paper, we assume to use ZDDs but original BDDs (binary decision diagrams) can be used for our algorithm almost similarly.}.
As illustrated in Fig.~\ref{fig:ZDD-Tree}, it is derived by reducing a binary decision tree, which indexes the set of item combinations. In this reduction process for ZDDs, we apply the following two reduction rules:
\begin{itemize}
\item Delete all nodes whose 1-edge directly points to the 0-terminal node. (Fig.~\ref{fig:ZDDrules}(a))
\item Share all equivalent nodes having the same item and the same pair of children. (Fig.~\ref{fig:ZDDrules}(b))
\end{itemize}
If we apply the two reduction rules as much as possible, we obtain a canonical form for a given set of combinations. The compression ratio of a ZDD depends on instance, but it can be 10 or 100 times compact in some practical cases. In addition, we can systematically construct a ZDD by a set operation (i.e., {\em union},  {\em intersection} and {\em set difference}) for a given pair of ZDDs. This algorithm is based on a recursive procedure with hash table techniques presented by Bryant \cite{Bryant86}. The computation time is theoretically bounded by the product of the two operands' ZDD sizes; however, empirically in many cases, it is almost linear time for the sum of input and output ZDD sizes.

Using ZDDs, we can solve combinatorial optimization problems in a simple procedure. First we describe a Boolean constraint $f$ in a formula of set operations. Then we can construct a ZDD for $f$ by applying ZDD set operations according to the formula. In this ZDD, each path from the root node to the 1-terminal node (we call it a {\em 1-path}) corresponds to one feasible solution in $\mathcal{S}_f$. The optimal solution can be found by exploring a 1-path having the minimum (or maximum) cost. A naive depth-first search takes an exponential time, but we can find the optimal solution in a linear time for the input ZDD size $O(|f|)$ by memoizing the optimal cost of the 1-paths starting from each ZDD node to the 1-terminal node, because we don't have to traverse the same ZDD node more than once by pruning the depth-first search using the memoized cost value. Thus, it is not difficult to obtain one optimal solution if we have constructed a ZDD for $f$ in main memory. 

Now, our interest is how to enumerate not only the optimal one but all the solutions no more than a given cost bound. 

\section{Enumeration of All Lower Cost Solutions}
\subsection{Motivating Example}
Knuth presented a graph instance of continental US map (Fig.~\ref{fig:US48}) in TAOCP, Chapter 7.1.4, Vol.~4-1 \cite{Knuth2009:BDD-taocp}. He discussed a problem to enumerate all Hamiltonian paths (visiting every state exactly once) across the US continent. This graph consists of 48 vertices and 105 edges, and every edge has a cost of mileage between the two state capital cities. In this Book, Knuth presented {\em simpath} algorithm to construct a ZDD representing a set of all feasible solutions of this problem. The algorithm is a kind of dynamic programming method by scanning the graph with a crossing section, called {\em frontier}. This algorithm constructs a ZDD in a top-down breadth-first manner according to the frontier scanning. This method can deal with many kinds of edge-induced graph subsets, such as simple and Hamiltonian paths, cycles, trees, connected components, cut sets, etc. We generically call such ZDD construction method {\em Frontier-based methods} \cite{Kawahara2017:IEICE} and working on various applications in practical problems, such as loss minimization of a power grid network~\cite{Inoue2014:IEEE-TSG}, hotspot detection~\cite{Ishioka2019}, and electoral redistricting~\cite{kawahara2017partition,fifield2020election}.

\begin{figure}[t]
\begin{center}
\hspace*{-5mm}
\includegraphics[width=100mm]{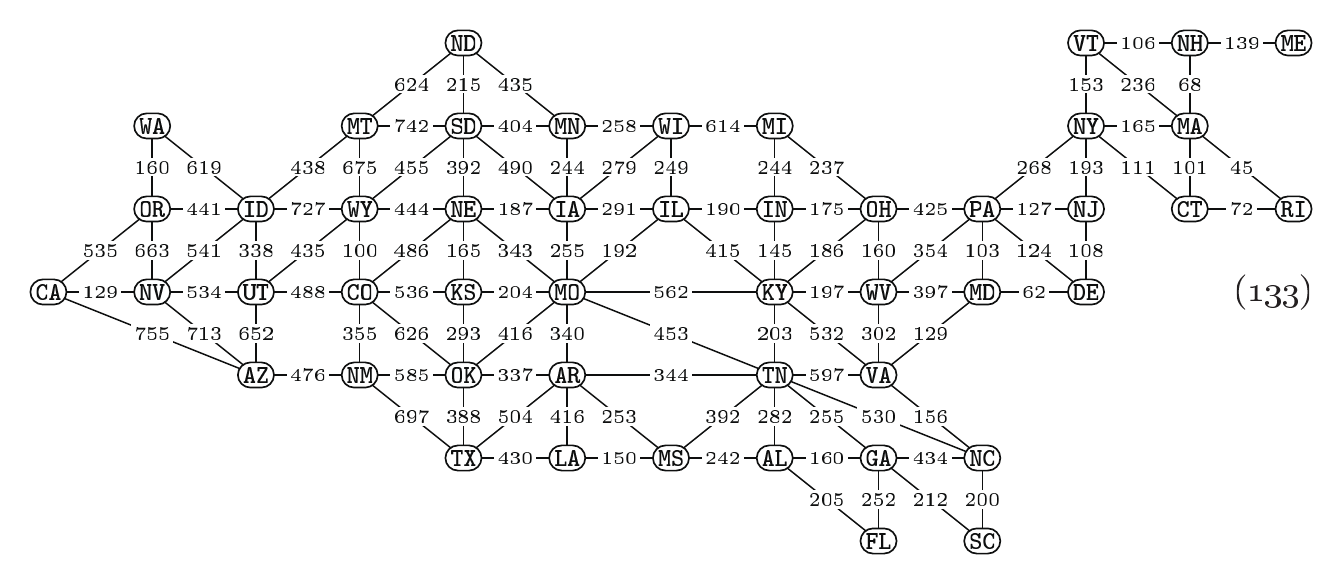}
\caption{US map graph shown in the Knuth's book \cite{Knuth2009:BDD-taocp}}
\label{fig:US48}
\vspace{-5mm}
\end{center}
\end{figure}

Using Knuth's algorithm, we can construct a ZDD enumerating all 6,876,928 Hamiltonian paths from WA to ME. The ZDD size is only 3,616 nodes, and computation time is only 0.02 sec. Knuth's method is surprisingly efficient to enumerate all feasible solutions; however, it is still difficult to enumerate a subset of solutions bounded by the cost function, such that the lengths are no more than 20\% increase from the shortest one. In real-life situations, we may accept a certain level of cost increase but want to trade another ``soft'' constraint, for example, we would like to avoid a route expecting a heavy snow in winter. In other cases, for p-value calculation in statistical analysis, we need to know (a big number of) the ranking of an observed solution.  Another useful application would be ``quality-controlled'' random sampling, which generates random solutions in a nearly optimal range. We expect that it is a practically useful and important task to enumerate all the cost-bounded solutions.

\subsection{Conventional Top-$k$ Search Methods}
{\em Top-$k$ search} (or {\em $k$-best search}) is a conventional approach to generate top-$k$ solutions based on existing optimization algorithms for finding one best solution. For example, there is a classical algorithm to solve a practical size of traveling salesman (= shortest Hamiltonian cycle) problem using a linear programming solver \cite{Miller1960}. This algorithm can find one minimum-cost solution, and we can generate top-$k$ solutions by $k$-times repetition of the algorithm with additional constraints to excluding the solutions already found. This approach requires a large computation time when $k$ becomes large. For Knuth's US map problem, we have 16,180 solutions up to 10\% cost increase, and 939,209 solutions up to 20\% increase. In such cases, the conventional top-$k$ approach is too time consuming. If we consider only the simple (not Hamiltonian) top-$k$ shortest paths problem, there is a sophisticated algorithm \cite{Eppstein98} using Dijkstra method with a heap data structure, but it is still time consuming when $k$ becomes a million or more.

\subsection{Classical Method with Dynamic Programming}
On this type of enumeration problems, we know a classical method with dynamic programming using a DP table to store the subtotal of the costs for each decision, which requires a pseudo-polynomial time and space with the total cost values. However, the cost values often become large in practical applications when dealing with path lengths, financial incomes, national populations, etc. For Knuth's US map problem, the total cost value becomes 35,461 (miles), and the DP table may have three millions of cells. This could work in a modern computer but time-consuming to explore all Hamiltonian paths with this DP table.

\begin{figure}[t]
\begin{center}
\includegraphics[width=110mm]{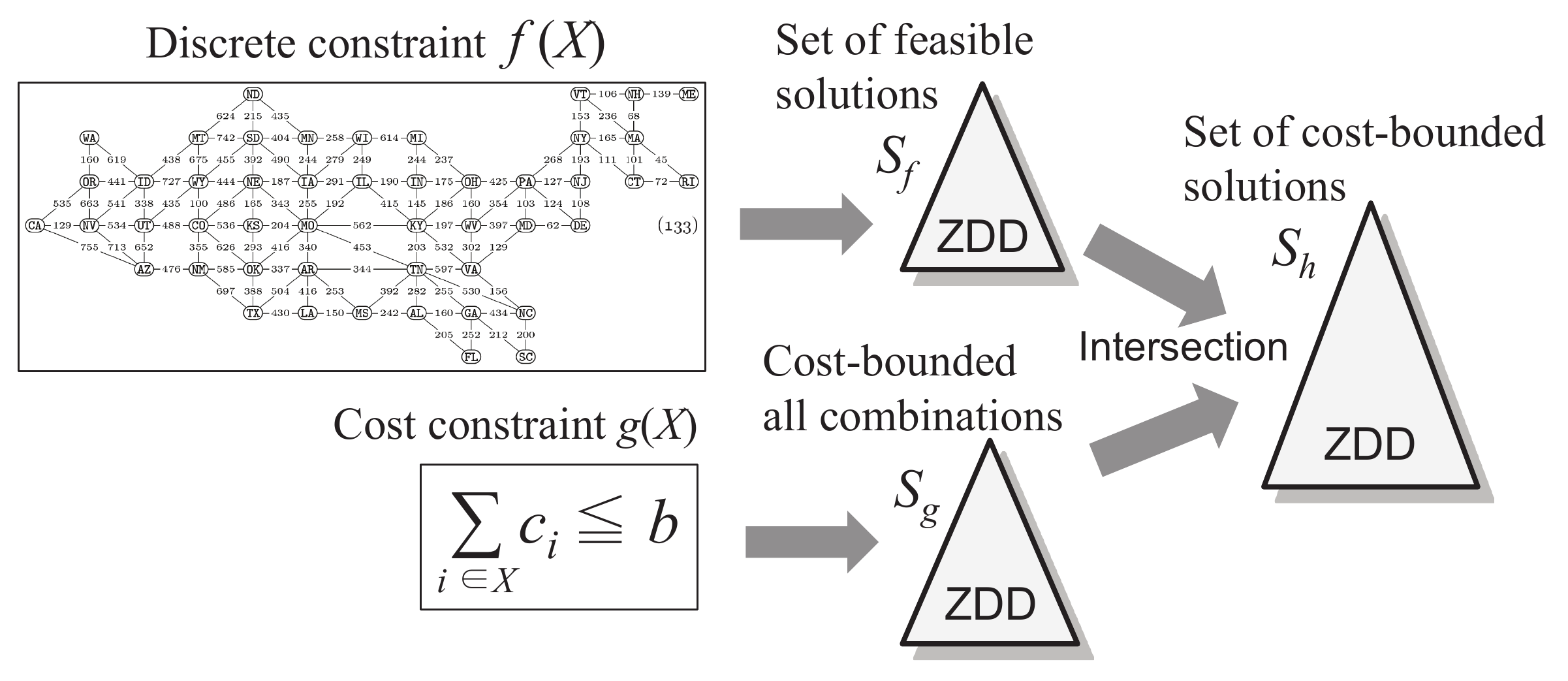}
\caption{Conventional method using ZDD operations }
\label{fig:ZDDintsec}
\vspace{-5mm}
\end{center}
\end{figure}

\begin{table}[t]
\begin{center}
\caption{ZDD sizes for $\mathcal{S}_g$ with uniform random costs}
\label{tab:cost}
\renewcommand{\arraystretch}{0.8}
{\small 
\begin{tabular}{|r|r|r|r|}
\hline
$n$ & cost = 1 & rand [1, 100] & rand [1, 10000] \\
\hline
10 & 30 & 53 & 52 \\
20 & 110 & 1,645 & 2,128 \\
30 & 240 & 7,225 & 54,302 \\
40 & 420 & 16,124 & 546,035 \\
50 & 650 & 27,589 & 1,501,267 \\
60 & 930 & 39,931 & 2,729,644 \\
70 & 1,260 & 55,387 & 4,193,333 \\
80 & 1,640 & 76,355 & 6,033,872 \\
90 & 2,070 & 102,831 & 8,394,357 \\
100 & 2,550 & 123,131 & 10,484,185 \\
110 & 3,080 & 148,508 & 12,933,163 \\
120 & 3,660 & 173,369 & 16,219,732 \\
\hline
\end{tabular} 
}
\vspace{-5mm}
\end{center}
\end{table}

\subsection{Conventional ZDD-Based Method}
ZDD-based method may be effective for representing a large number of combinations. Figure~\ref{fig:ZDDintsec} illustrates the existing method for enumerating all cost-bounded solutions using conventional set operations of ZDDs. We construct a ZDD for feasible solutions $f$, and also construct another ZDD $g$ for the cost constraint $\mathcal{S}_g = \{X \subseteq I \mid \mathrm{Cost}(X) \leq b \}$. We then apply an intersection operation between the two ZDDs, and the result of ZDD $h$ represents the set of all cost-bounded solutions $\mathcal{S}_h = \{X \in \mathcal{S}_f \mid \mathrm{Cost}(X) \leq b \}$. 

This method seems correct and effective, but unfortunately, there is a significant weak point. The ZDD for $\mathcal{S}_g$ has essentially the same structure as the DP table used in classical dynamic programming. A linear cost constraint function generates a small ZDD only if the total cost value is bounded by a small constant, but in general it is exponential \cite{Hosaka1997}\footnote{They discuss the complexity on original BDDs, but essentially similar for ZDDs. }. Table~\ref{tab:cost} shows our preliminary experiments to see the ZDD sizes for the $n$-input cost constraints, where each cost is a uniformly random in [1, 100] or [1, 10000], with the cost bound $b$ as a half of the average cost. We can observe that the ZDD size for $\mathcal{S}_g$ grows large more than a million nodes when the range of cost values reaches thousands and the number of items are fifty or more. In those cases, the conventional ZDD-based method are not efficient, even if ZDDs for $\mathcal{S}_f$ and $\mathcal{S}_h$ are small. In the following sections, we present a new method for constructing ZDD for $\mathcal{S}_h$ without using ZDD for $\mathcal{S}_g$.

\section{Proposed Algorithm}
\subsection{Basic Procedure and Conventional Memoizing}
Our algorithm assumes the following inputs: the number of items $n$, a ZDD $f$ for the set of the feasible solutions $\mathcal{S}_f$, cost values $c_i \ \ (1\leq i \leq n)$, and a cost bound $b$. The cost values and the bound are integers and can be zero or negative. We do not have to fix the bit-width of the integers. The output of the algorithm is a newly constructed ZDD $h$ for $\mathcal{S}_h = \{X \in \mathcal{S}_f \mid \mathrm{Cost}(X) \leq b \}$. Here is our basic procedure {\bf BacktrackNaive}.

\begin{spacing}{0.83}
\noindent\hrulefill\\
1: \ $\mathbf{ZDD}$ {\bf BacktrackNaive}$(\mathbf{ZDD} \ f, \mathbf{int} \ b)$\\
2: \ \{\\
3: \ \ \ \ \ $\mathbf{if}$ $f = [0]$ $\mathbf{return}$ $[0]$ \ /* 0-terminal case */\\
4: \ \ \ \ \ $\mathbf{if}$ $f = [1]$ $\mathbf{then}$ \hspace{7.5mm} /* 1-terminal case */\\ 
5: \ \ \ \ \ \ \ \ \ $\mathbf{if}$ $b \geq 0$ $\mathbf{return}$ $[1]$\\
6: \ \ \ \ \ \ \ \ \ $\mathbf{else}$ $\mathbf{return}$ $[0]$\\
7: \ \ \ \ \ /* $f$ consists of $(x, f_0, f_1)$ */\\
8: \ \ \ \ \ $h_0 \leftarrow$ {\bf BacktrackNaive}$(f_0, b)$\\
9: \ \ \ \ \ $h_1 \leftarrow$ {\bf BacktrackNaive}$(f_1, b-\mathrm{cost}(x))$\\
10:\ \ \ \ \ $h \leftarrow \mathbf{ZDD}(x, h_0, h_1)$ /* applying ZDD reduction rules */\\
11:\ \ \ \ \ $\mathbf{return}$ $h$\\
12:\ \}\\
\hspace*{0mm}\hrulefill
\end{spacing}
This procedure recursively performs a simple backtrack traversal on the input ZDD $f$ in a depth-first manner. On each recursive step for the problem $(f, b)$, it is divided into the two sub-problems $(f_0, b)$ and $(f_1, b-\mathrm{cost}(x))$. Repeating this recursive call, we eventually reach a terminal node. If it is 1-terminal and the current cost bound is not negative, then we accept it as a solution and return the 1-terminal node before backtracking. Otherwise we reject it with the 0-terminal node. In the case of non-terminal step, we obtain $h_0$ and $h_1$ as the results of the two sub-problems, and we generate an output ZDD node $h$. Note that $\mathbf{ZDD}(x, h_0, h_1)$ in line 10 means a subroutine to generate a new ZDD node specified by the three attributes. If an equivalent ZDD node already exists, then just a pointer to the existing node is returned to avoid generating a redundant node.

\begin{figure}[t]
\begin{center}
\includegraphics[width=52mm]{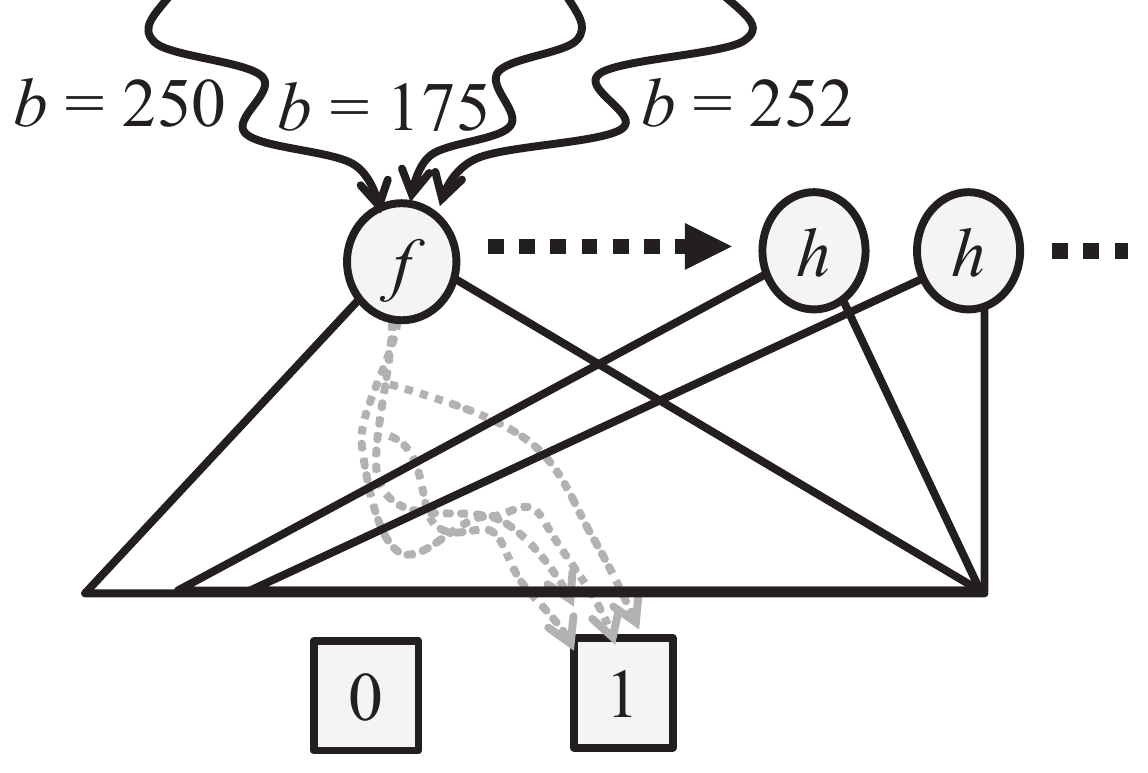}
\caption{Revisiting the same ZDD node with different cost bounds}
\label{fig:backtrack}
\vspace{-5mm}
\end{center}
\end{figure}

{\bf BacktrackNaive} can correctly construct a ZDD for $S_h$; however, this basic algorithm may visit the same ZDD node many times, and always requires an exponential number of steps for the depth-first traversal. In many ZDD set operations, we can use a memoizing technique (also called {\em operation cache}) to avoid duplicated traversal, and usually it is very effective to accelerate the operations. However, as illustrated in Fig.~\ref{fig:backtrack}, it is not easy for the cost-bounded ZDD operations because when visiting the same ZDD node at the second time, the subtotal cost of used items from the root node to the current node may be different from one at the first visit. In such cases, the operation result may not be the same. Thus, we should check a pair of $(f, b)$ as a key to the memo.

Here is a procedure {\bf BacktrackMemo} with a conventional memoizing technique.

\begin{spacing}{0.83}
\noindent\hrulefill\\
1: \ $\mathbf{ZDD}$ {\bf BacktrackMemo}$(\mathbf{ZDD} \ f, \mathbf{int} \ b)$\\
2: \ \{\\
3: \ \ \ \ \ $\mathbf{if}$ $f = [0]$ $\mathbf{return}$ $[0]$ \ /* 0-terminal case */\\
4: \ \ \ \ \ $\mathbf{if}$ $f = [1]$ $\mathbf{then}$ \hspace{7.5mm} /* 1-terminal case */\\ 
5: \ \ \ \ \ \ \ \ \ $\mathbf{if}$ $b \geq 0$ $\mathbf{return}$ $[1]$\\
6: \ \ \ \ \ \ \ \ \ $\mathbf{else}$ $\mathbf{return}$ $[0]$\\
7: \ \ \ \ \ $h \leftarrow$ {\bf memo}$[f, b]$ ; $\mathbf{if}$ $h$ exists $\mathbf{return}$ $h$\\
8: \ \ \ \ \ /* $f$ consists of $(x, f_0, f_1)$ */\\
9: \ \ \ \ \ $h_0 \leftarrow$ {\bf BacktrackMemo}$(f_0, b)$\\
10:\ \ \ \ \ $h_1 \leftarrow$ {\bf BacktrackMemo}$(f_1, b-\mathrm{cost}(x))$\\
11:\ \ \ \ \ $h \leftarrow \mathbf{ZDD}(x, h_0, h_1)$ /* applying ZDD reduction rules */\\
12:\ \ \ \ \ {\bf memo}$[f, b] \leftarrow h$\\
13:\ \ \ \ \ $\mathbf{return}$ $h$\\
14:\ \}\\
\hspace*{0mm}\hrulefill
\end{spacing}

In this pseudo code, $\mathbf{memo}[f, b]$ in line 7 and 12 is the memoizing function, added to avoid duplicated traversal. In this procedure, all possible variations of subtotal costs may be produced and stored in the memo, which is essentially a similar task to make a DP table in dynamic programming method. Unfortunately, if the cost values become large and have a nearly random distribution, the probability of hitting the same subtotal cost is significantly low, and this memoizing technique is not very effective.

\subsection{Interval-Memoized Backtracking}
In the depth-first traversal shown in Fig.~\ref{fig:backtrack}, if we revisit a same ZDD node $f$ with a cost bound $b'$ different from the first bound $b$, the result $h$ may not be the same, but intuitively, if $b$ and $b'$ are very close, the result is likely the same with a good possibility. More formally, the result $h$ must be the same if there is no solution with a cost between $b$ and $b'$. This is the key idea of our proposed method. Here we define the two important functions:
\begin{itemize}
\item $\mathit{accept\_worst}(f, b)$:\\
the worst (highest) cost of an accepted solution.
\item $\mathit{reject\_best}(f, b)$:\\
the best (lowest) cost of rejected one but feasible in $f$.
\end{itemize}
Using those two functions, we can guarantee to compute the same result at the second visit if and only if
$\mathit{accept\_worst}(f, b) \leq b' < \mathit{reject\_best}(f, b)$.

\begin{figure}[t]
\begin{center}
\includegraphics[width=70mm]{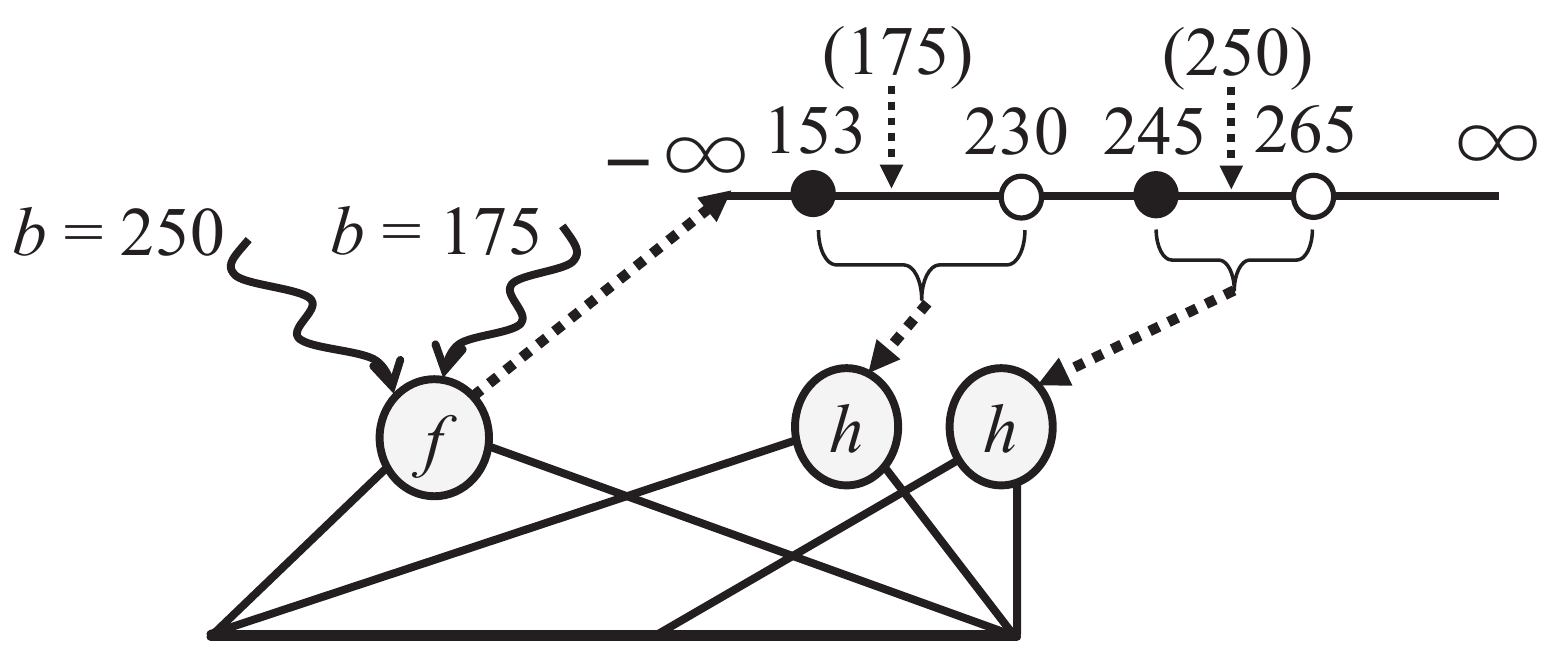}
\caption{Idea of interval-memoizing}
\label{fig:memo}
\vspace{-5mm}
\end{center}
\end{figure}

Figure~\ref{fig:memo} illustrates the idea of interval memoized backtracking. For each ZDD node $f$, we prepare a {\em numerical-ordered memory} to store the intervals. A black dot indicates a point of $\mathit{accept\_worst}$, and white dot is a point of $\mathit{reject\_best}$. In this example, if we first visit $f$ with $b=250$, an effective interval [245, 265) is determined and a pointer to the result $h$ is stored with the interval information. At the next time, if we revisit $f$ with $b=252$, we can surely avoid a duplicate recursive call. On the other hand, if we revisit $f$ with $b=175$, it is out of scope and the depth-first traversal goes on, and then another result is stored with the interval [153, 230).

We can implement the numerical-ordered memories using {\em self-balancing binary search trees}, available as std::map in gnu C++ standard library. It requires $O(\log m)$ time for each read/write in average, where $m$ is the number of entries in the memory. This is not a constant factor but acceptable overhead if the ratio of the ZDD size $|h| / |f|$ is not very large.

Another important issue is how to know the correct intervals. Fortunately, we can easily compute $\mathit{accept\_worst}$ and $\mathit{reject\_best}$ with a simple modification of the backtracking process. Here we present the pseudo code of our algorithm {\bf BacktrackIntervalMemo}.

\begin{spacing}{0.83}
\noindent\hrulefill\\
1: \ {\bf BacktrackIntervalMemo}$(\mathbf{ZDD} \ f, \mathbf{int} \ b)$\\
2: \ /* returns a triple $(\mathbf{ZDD} \ h, \mathbf{int} \ \mathit{accept\_worst}, \mathit{reject\_best})$  */\\
3: \ \{\\
4: \ \ \ \ \ $\mathbf{if}$ $f = [0]$ $\mathbf{return}$ $([0], -\infty, \infty)$\\
5: \ \ \ \ \ $\mathbf{if}$ $f = [1]$ $\mathbf{then}$\\ 
6: \ \ \ \ \ \ \ \ \ $\mathbf{if}$ $b \geq 0$ $\mathbf{return}$ $([1], 0, \infty)$\\
7: \ \ \ \ \ \ \ \ \ $\mathbf{else}$ $\mathbf{return}$ $([0], -\infty, 0)$\\
8: \ \ \ \ \ $(h, aw, rb) \leftarrow$ {\bf memo}$_\mathbf{ord}[f, b]$; $\mathbf{if}$ $\mathrm{exists}$ $\mathbf{return}$ $(h, aw, rb)$\\
9: \ \ \ \ \ /* $f$ consists of $(x, f_0, f_1)$ */\\
10:\ \ \ \ $(h_0, aw_0, rb_0) \leftarrow$ {\bf BacktrackIntervalMemo}$(f_0, b)$\\
11:\ \ \ \ \ $(h_1, aw_1, rb_1) \leftarrow$ {\bf BacktrackIntervalMemo}$(f_1, b-\mathrm{cost}(x))$\\
12:\ \ \ \ \ $h \leftarrow \mathbf{ZDD}(x, h_0, h_1)$ /* applying ZDD reduction rules */\\
13:\ \ \ \ \ $aw \leftarrow \mathbf{max}(aw_0, \ aw_1+ \mathrm{cost}(x))$\\
14:\ \ \ \ \ $rb \leftarrow \mathbf{min}(rb_0, \ rb_1+ \mathrm{cost}(x))$\\
15:\ \ \ \ \ {\bf memo}$_\mathbf{ord}[f, [aw, rb)] \leftarrow h$\\
16:\ \ \ \ \ $\mathbf{return}$ $(h, aw, rb)$\\
17:\ \}\\
\hspace*{0mm}\hrulefill
\end{spacing}

This procedure is extended to return not only the ZDD $h$ but also the two integers $aw \ (\mathit{accept\_worst})$ and $rb \ (\mathit{reject\_best})$. The return values on the terminal cases are written in line 4, 6, and 7. If $f$ is a 1-terminal and $b$ is not negative, then we get $aw \leftarrow 0$ since $b=0$ is the lowest value to return the same decision. Similarly, if $b<0$, then we get $rb \leftarrow 0$ since  $b=0$ is the next lowest value to return a different decision. Line 13 and 14 specify the non-terminal case. We already have the intervals $[aw_0, rb_0)$ and $[aw_1, rb_1)$ returned from the two sub-problems. Then we choose $aw$ for the maximum one, and $rb$ for the minimum one of the two cases. In this way, after generating each ZDD node $h$, we already know the exact interval $[aw, rb)$ to be stored, only with a constant factor of overhead.

Our algorithm has some interesting properties. If we give $b=-\infty$, the algorithm always returns empty solution and $\mathit{reject\_best}$ shows the minimum cost of the solution in $f$. Similarly, for $b=\infty$, it always returns $f$ itself and $\mathit{accept\_worst}$ shows the maximum cost. For such two cases, the behavior of the algorithm is essentially same as a branch-and-bound for solving optimization problems, and the computation time is linearly bounded by the input size $O(|f|)$. In addition, if we give an arbitrary bound $b$, our algorithm efficiently constructs a ZDD which essentially corresponds to a DP table in a dynamic programming. Thus, we can see that our method integrates the two classical techniques, branch-and-bound and dynamic programming, and the same program can be used both for optimization and enumeration problems.

The interval-memoizing technique has another advantage that it is effective for multiple times of executions. We may repeat the backtracking algorithm many times for the same $f$ with different cost bound to analyze the distribution of the solutions.
In such cases, the results of different ZDDs may share their subgraphs, and most of the shared parts are immediately returned from the memo after the second time. 

After generating ZDDs for the cost-constraint problems, we can utilize various kinds of algebraic set operations of ZDD library. For example, if we already have the two ZDDs $h_{lb}$ and $h_{ub}$ with the cost bounds $lb$ and $ub$, then the set difference $(h_{ub} \setminus h_{ub})$ gives all solutions in a cost range $(lb, ub]$.

\subsection{Complexity Analysis}
Here we discuss the complexity of our algorithm. Intuitively, the computation time is bounded by the input and output ZDD sizes.
\begin{theo}
When the input ZDD $f$ represents a power set $2^I$, the time complexity of {\bf BacktrackIntervalMemo} is bounded by $O(|f| + |h| \log |h|)$, where $|f|$ and $|h|$ are the input and output ZDD sizes.
\label{th4}
\end{theo}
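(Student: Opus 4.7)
The plan is to bound the running time by analyzing separately (a) the number of memo entries created (i.e., cache misses) and (b) the work performed per invocation.

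First, I would establish the crucial structural property underlying the interval memo: each cache miss at an input node $f$ records a maximal interval $[aw, rb)$ of bounds $b'$ that all yield the same output ZDD, and conversely any two bounds producing the same output must share such an interval. Consequently, the entries at a given $f$ are in one-to-one correspondence with the distinct outputs the routine can produce at $f$, and their intervals are pairwise disjoint. This needs to be proved by induction on the depth of the recursion, using lines 13--14 to verify that $aw$ and $rb$ computed from the children's intervals are indeed the tight boundaries at $f$.

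Next, I would exploit that the input is the power-set ZDD, which is a single chain with exactly one input node $f_i$ per level. At $f_i$, each distinct output is a sub-ZDD of $h$ over items $\{i, \ldots, n\}$. I would set up a charging argument: every non-terminal output at $f_i$ is either rooted at level $i$ (and is charged to a unique $h$-node at level $i$) or rooted at some deeper level $j > i$ because the ZDD node-deletion rule fired when including some intermediate item was infeasible (and is charged to a ``propagation'' slot). Summed over the $O(|f|)$ levels, this yields $O(|h| + |f|)$ total memo entries, with $O(|f|)$ absorbing the terminal and propagation cases.

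Finally, I would bound the per-call work. Each invocation performs one $O(\log |h|)$ balanced-BST lookup, and on a cache miss one $O(\log |h|)$ insert plus one amortized $O(1)$ call to $\mathbf{ZDD}(\cdot, \cdot, \cdot)$ for canonicalization. Since each cache miss spawns exactly two recursive calls, the total number of calls is $O(|h| + |f|)$; calls that reach a terminal cost only $O(1)$ and contribute $O(|f|)$ in aggregate. Combining these estimates gives the stated $O(|f| + |h|\log|h|)$ bound. The hardest step will be the charging argument: because ZDD node deletion can cause the same output sub-ZDD to appear at several input levels, I need to assign each memo entry to a unique $h$-node or to a bounded $O(|f|)$ overhead bucket without double counting. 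A natural formalization is to define, for each sub-ZDD $v$ of $h$, the set of input levels at which $v$ can arise as an output, and show that the total size of these sets is $O(|h| + |f|)$.
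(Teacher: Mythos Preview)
Your plan is far more detailed than the paper's own proof, which in two sentences asserts that the interval memo prevents recomputing the same output at a fixed input node, concludes that memo accesses are $O(|h|)$, and adds $O(|f|)$ for touching every input node once. The paper's proof does not mention ZDD node deletion at all; the possibility of two \emph{different} input nodes producing the same output node is raised only in the paragraph after the proof, as an obstacle for general~$f$.

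You are right to flag node deletion as the crux, but the charging scheme you sketch does not go through: propagation is not $O(1)$ per level. Take $n = 2m+L$ with $c_i = 2^{i-1}$ for $1 \le i \le m$, then $c_i = M$ (huge) for $m < i \le m+L$, and $c_{m+L+j} = 2^{j-1}$ for $1 \le j \le m$, with bound $b = 2^m - 1$. The first and last blocks realize every integer cost in $\{0,\dots,2^m-1\}$, so width-$1$ intervals force every one of the $2^m$ reachable bounds at each ``expensive'' level $m+1,\dots,m+L$ to be a cache miss, and each such miss outputs (via node deletion) a distinct node of $h$ sitting at level $m+L+1$. That is $\Theta(L\cdot 2^m)$ propagated misses, while $|h| = \Theta(2^m)$ and $|f| = 2m+L$; with $L = 2^m$ the number of misses is $\Theta(2^{2m})$ against a claimed $O(|f|+|h|\log|h|) = O(m\,2^m)$. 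So the step you label ``hardest'' actually fails, and the same construction shows the paper's brief argument is incomplete as well: the stated bound seems to require either the quasi-reduced output size in place of $|h|$, or an extra hypothesis ruling out long edges in $h$.
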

\begin{proof}
When revisiting a ZDD node in $f$ with some different $b$, the interval-memo never fail to detect whether the outputs are the same or not. Therefore, we never perform duplicated traversals in the output ZDD, and thus the total accesses to the memo is bounded by $O(|h|)$. We know that each access to the memo requires $O(\log m)$ steps, where $m$ is total entries bounded by $|h|$. On the other hand, even if $h$ is an empty set or much smaller than $f$, the algorithm visits all the input ZDD nodes once for each, so at least $O(|f|)$ steps are required. Thus, we conclude that the time complexity is $O(|f| + |h| \log |h|)$.
\qedhere
\end{proof}

The above theorem discusses only in a basic case when $f$ is a power set, where the problem has only a cost constraint. Unfortunately, this theorem does not always stand for all general cases of $f$. It is correct that we never perform duplicated traversals in the output ZDD $h$ when we revisiting a same ZDD node in $f$. However, there is the case that different two ZDD nodes in $f$ may produce a same ZDD node in $h$. In such cases the memo cannot predict it and we may traverse the same subgraph of $h$ multiple times. We can artificially design such a counterexample case, but it does not likely happen in practical problems. We consider that the computation time is empirically $O(|f| + |h| \log |h|)$ in many practical applications.

\section{Experimental Results}
We implemented our algorithm with a BDD/ZDD package included in {\em Graphillion} library \cite{Graphillion2013}. The program is written in C and C++ code. We used a PC (Intel Xeon W-2225, 4 cores 4.1GHz, 256GB main memory), Linux - Ubuntu 20.04 and g++ 9.3.0. 

\begin{table}[t]
\begin{center}
\caption{Experimental results for Knuth's US map problem}
\label{tab:US48}
\renewcommand{\arraystretch}{0.8}
\setlength{\tabcolsep}{0.8mm} 
{\small 
Continental US map graph ($V$: 48, $E$: 105)\\
\begin{tabular}{|r|r|r|r|r||r|r||r|}
\hline
bound $b$ (ratio) & \#solutions & ZDD $|h|$ & \multicolumn{2}{c||}{proposed method} & \multicolumn{2}{c||}{conventional memo} & clingo \\
\cline{4-8}
& & & time(sec) & \#calls & time(sec) & \#calls & time(sec) \\
\hline
11,698 (1.00) & 1 & 47 & 0.028 & 7,329 & 0.066 & 197,891 & 10.784 \\
11,814 (1.01) & 8 & 99 & 0.029 & 7,441 & 0.070 & 197,891 & 5.243 \\
11,931 (1.02) & 28 & 152 & 0.028 & 7,555 & 0.069 &197,891 & 7.028 \\
12,282 (1.05) & 388 & 1,001 & 0.029 & 9,489 & 0.078 & 229,257 & 8.783 \\
12,867 (1.10) & 16,180 & 9,679 & 0.035 & 28,127 & 0.118 & 340,277 & 12.080 \\
14,037 (1.20) & 939,209 & 72,808 & 0.089 & 155,985 & 0.664 & 1,573,161 & 26.276 \\
15,207 (1.30) & 4,525,541 & 99,759 & 0.113 & 206,089 & 1.414 & 3,180,475 & 40.463 \\
16,377 (1.40) & 6,702,964 & 38,548 & 0.051 & 78,295 & 1.743 & 4,062,971 & 39.015 \\
17,547 (1.50) & 6,876,526 & 4,934 & 0.029 & 9,971 & 1.784 & 4,274,085 & 36.879 \\
(*) 18,040 (1.54) & 6,876,928 & 3,616 & 0.028 &7,233 & 1.832 & 4,281,461 & 37.031 \\
\hline
\multicolumn{8}{l}{(*): maximum cost. (here $h = f$)}\\
\end{tabular} \\
}
\vspace{-5mm}
\end{center}
\end{table}

\subsection{Results for Hamiltonian Path Problem}

Here we show our experimental results for nontrivial size instances of Hamiltonian path problem, which is one of the most popular and hard combinatorial problems.
First, we show the results for Knuth's US map problem (Hamiltonian paths from WA to ME), discussed in Section 3. Here, the number of the items $n = 105$, the total feasible solutions (without cost bound) $|S_f| = 6,876,928$, the input ZDD size $|f|$ is 3,616 nodes, the minimum (shortest) cost is 11,698, and the maximum (longest) cost is 18,040. We then applied our algorithm with various cost bounds. In Table~\ref{tab:US48}, the computation time includes the time for constructing the input ZDD $f$, the time for constructing the output ZDD $h$, and the time for counting the output ZDD size $|h|$. The column ``proposed method'' shows the execution time and the total recursive calls of our algorithm using interval-memoizing, and ``conventional memo'' compares the execution time and the total calls when using the conventional memoizing technique discussed in Section 4.1. We also show the results of ``clingo'', the answer set programming system {\em clingo} \cite{Gebser2012} ver. 5.4.0. This is one of the efficient enumeration tools for combinatorial problems,  based on SAT-solving and model counting techniques\footnote{CPLEX and gurobi also have an option to output multiple solutions \cite{Danna2007}; however, this ``solution pool" feature finds more solutions directly from intermediate feasible solutions, and it is too time consuming when the number of the solutions becomes very large. To our knowledge, clingo is currently one of the most powerful tools for counting thousands or more solutions.}.

The experimental results show that we succeeded in exactly enumerating millions of cost-bounded Hamiltonian paths for a real-life instance only in 0.1 sec. To see the number of recursive calls, we can observe that the computation time empirically follows $O(|f| + |h| \log |h|)$ as discussed in previous section. We can also confirm that our proposed method is very effective especially when $|h|$ is well-compressed. Interval-memoizing is ten times to six hundred times faster than using the conventional memoizing. The existing tool clingo also succeeded in counting the exact number of solutions as well, but it is hundred or more times slower than our proposed method.

\begin{table}[t]
\begin{center}
\caption{Experimental result for Hamiltonian paths in grid graphs}
\label{tab:nxn}
\renewcommand{\arraystretch}{0.8}
\setlength{\tabcolsep}{0.6mm} 
{\small 
8 $\times$ 8 grid graph ($V$: 81, $E$: 144)\\
\begin{tabular}{|r|r|r|r|r||r|r||r|}
\hline
bound $b$ (ratio) & \#solutions & ZDD $|h|$ & \multicolumn{2}{c||}{proposed method} & \multicolumn{2}{c||}{conventional memo} & clingo \\
\cline{4-8}
& & & time(sec) & \#calls & time(sec) & \#calls & time(sec) \\
\hline
113,552 (1.00)  & 1 & 80 & 0.087 & 93,393 & 114.937 & 151,236,577 & ($>$3,600) \\
114,688 (1.01) & 2,854 & 7,693 & 0.092 & 112,121 & 116.907 & 153,108,163 & ($>$3,600) \\
115,823 (1.02) & 125,880 & 82,363 & 0.165 & 285,709 & 120.376 & 155,502,227 & ($>$3,600) \\
116,959 (1.03) & 2,364,994 & 432,455 & 0.639 & 1,046,119 & 126.568 & 157,794,855 & ($>$3,600) \\
119,230 (1.05) & 133,629,862 & 3,094,302 & 5.523 & 6,473,755 & 138.857 & 163,635,511 & ($>$3,600) \\
122,636 (1.08) & 1,893,484,003 & 5,848,008 & 10.976 & 11,893,469 & 200.010 & 213,488,793 & ($>$3,600) \\
124,907 (1.10) & 2,635,409,530 & 2,101,940 & 3.412 & 4,281,999 & 245.526 & 258,640,783 & ($>$3,600) \\
127,178 (1.12) & 2,687,915,888 & 218,309 & 0.255 & 450,137 & 253.511 & 274,526,635 & ($>$3,600) \\
(*) 130,079 (1.15) & 2,688,307,514 & 46,613 & 0.090 & 93,227 & 247.261 & 275,391,699 & ($>$3,600) \\
\hline
\multicolumn{8}{l}{(*): maximum cost. (here $h = f$)}
\end{tabular} \\
\vspace{2mm}
10 $\times$ 10 grid graph ($V$: 121, $E$: 220)\\
\begin{tabular}{|r|r|r|r|r|}
\hline
bound $b$ (ratio) & \#solutions & ZDD $|h|$ & \multicolumn{2}{c|}{proposed method} \\
\cline{4-5}
& & & time(sec) & \#calls \\
\hline
170,010 (1.00) & 1 & 120 & 0.588 & 997,797\\
171,710 (1.01) & 416,589 & 276,180 & 0.896 & 1,641,231\\
173,410 (1.02) & 270,414,340 & 10,388,829 & 20.667 & 23,437,909\\
175,110 (1.03) & 26,560,896,936 & 89,730,352 & 219.796 & 186,280,687\\
178,511 (1.05) & 10,319,390,767,690 & 586,360,102 & 1,684.215 & 1,183,335,939\\
183,611 (1.08) & 623,456,177,103,148 & 1,154,540,999 & 3,411.512 & 2,318,089,817\\
187,011 (1.10) & 1,311,263,635,264,660 & 1,002,804,299 & 2,980.704 & 2,009,425,775\\
190,411 (1.12) & 1,442,845,484,382,530 & 460,708,572 & 1,255.781 & 923,313,563\\
195,512 (1.15) & 1,445,778,909,234,550 & 3,599,172 & 5.565 & 7,224,627\\
(*) 198,385 (1.17) & 1,445,778,936,756,068 & 498,417 & 0.664 & 996,835 \\
\hline
\multicolumn{5}{l}{(*): maximum cost. (here $h = f$)}
\end{tabular} 
}
\vspace{-5mm}
\end{center}
\end{table}

Next, we show the results for another type of graph instance: $n \times n$ grid graphs. This graph consists of $(n+1)^2$ vertices and $2 n(n+1)$ edges, each edge has a uniformly random cost in the range [1000, 1999]. We tried to enumerate all lower cost Hamiltonian paths between the two opposite corners of the grid graph. Table~\ref{tab:nxn} shows the results for $8 \times 8$ and $10\times 10$ graphs. 
In the table of the 10$\times$10 instance, we only shows the results of our proposed method since the other methods are too time-consuming. The experiments show that our method can exactly enumerate up to billions of all lower cost solutions in a few seconds, and up to trillions of solutions in an hour. We can observe that the ratio of ``\#calls'' in proposed method and conventional memoizing method becomes more significant than the results for US map example, as much as several thousand times. Unfortunately, clingo failed
to find any one Hamiltonian path in one hour timeout.

\begin{table}[t]
\begin{center}
\caption{Experimental results for simple path problem with Knuth's US map graph}
\label{tab:US48s}
\renewcommand{\arraystretch}{0.8}
\setlength{\tabcolsep}{0.8mm} 
{\small 
Continental US map graph ($V$: 48, $E$: 105)\\
\begin{tabular}{|r|r|r|r|r||r|r|}
\hline
bound $b$ (ratio) & \#solutions & ZDD $|h|$ & \multicolumn{2}{c||}{proposed method} & \multicolumn{2}{c|}{conventional memo} \\
\cline{4-7}
& & & time(sec) & \#calls & time(sec) & \#calls \\
\hline
3,680 (1.00) & 1 & 13 & 0.036 & 9,379 & 0.045 & 62,239 \\ 
4,048 (1.10) & 2,117 & 562 & 0.036 & 12,501 & 0.057 & 90,545 \\
4,416 (1.20) & 60,197 & 3,270 & 0.044 & 25,181 & 0.086 & 192,667 \\
5,520 (1.50) & 6,883,565 & 55,894 & 0.123 & 209,471 & 0.360 & 865,777 \\
7,360 (2.00) & 708,403,649 & 755,726 & 1.672 & 2,101,933 & 3.472 & 5,188,657 \\
9,200 (2.50) & 17,172,729,980 & 2,648,737 & 6.511 & 6,690,323 & 10.545 & 13,014,219 \\
11,040 (3.00) & 138,284,569,831 & 4,407,834 & 9.952 & 9,879,225 & 16.862 & 20,078,973 \\
12,880 (3.50) & 386,645,930,955 & 4,303,466 & 9.372 & 9,049,021 & 19.873 & 24,864,029 \\
14,720 (4.00) & 479,798,414,556 & 2,057,063 & 3.906 & 4,164,391 & 20.117 & 27,656,725 \\
16,560 (4.50) & 483,362,229,491 & 152,358 & 0.197 & 307,091 & 19.474 & 28,723,351 \\
(*) 18,062 (4.91) & 483,366,193,920 & 4,636 & 0.039 & 9,273 & 19.098 & 28,829,025 \\
\hline
\multicolumn{7}{l}{(*): maximum cost. (here $h = f$)}\\
\end{tabular} \\
}
\vspace{-5mm}
\end{center}
\end{table}

\begin{table}[t]
\begin{center}
\caption{Experimental results for simple path problem with grid graphs}
\label{tab:nxns}
\renewcommand{\arraystretch}{0.8}
\setlength{\tabcolsep}{0.8mm} 
{\small 
6$\times$6 grid graph ($V$: 49, $E$: 84)\\
\begin{tabular}{|r|r|r|r|r||r|r|}
\hline
bound $b$ (ratio) & \#solutions & ZDD $|h|$ & \multicolumn{2}{c||}{proposed method} & \multicolumn{2}{c|}{conventional memo} \\
\cline{4-7}
& & & time(sec) & \#calls & time(sec) & \#calls \\
\hline
15,818 (1.00) & 1 & 12 & 0.038 & 18,933 & 5.389 & 10,145,889 \\
17,400 (1.10) & 44 & 92 & 0.042 & 19,367 & 5.835 & 10,724,097 \\
18,982 (1.20) & 447 & 269 & 0.040 & 20,443 & 6.411 & 11,704,579 \\
23,727 (1.50) & 9,290 & 3,816 & 0.042 & 30,973 & 7.083 & 12,461,055 \\
31,636 (2.00) & 340,703 & 41,322 & 0.083 & 118,697 & 9.985 & 15,990,787 \\
39,545 (2.50) & 6,008,898 & 225,932 & 0.335 & 508,325 & 19.479 & 26,484,267 \\
47,454 (3.00) & 63,239,073 & 697,401 & 1.210 & 1,458,539 & 52.233 & 57,564,039 \\
55,503 (3.50) & 302,233,010 & 1,124,552 & 2.120 & 2,292,043 & 97.376 & 97,825,915 \\
63,272 (4.00) & 534,048,953 & 712,010 & 1.225 & 1,442,907 & 129.568 & 127,805,447 \\
71,181 (4.50) & 575,410,000 & 95,026 & 0.126 & 192,465 & 136.216 & 140,949,659 \\
(*) 77,103 (4.87) & 575,780,564 & 9,430 & 0.042 & 18,861 & 134.397 & 141,861,465 \\
\hline
\multicolumn{7}{l}{(*): maximum cost. (here $h = f$)}\\
\end{tabular} \\
\vspace{2mm}
7 $\times$ 7 grid graph ($V$: 64, $E$: 112)\\
\begin{tabular}{|r|r|r|r|r|}
\hline
bound $b$ (ratio) & \#solutions & ZDD $|h|$ & \multicolumn{2}{c|}{proposed method} \\
\cline{4-5}
& & & time(sec) & \#calls \\
\hline
17,906 (1.00) & 1 & 14 & 0.076 & 67,255 \\
19,697 (1.11) & 116 & 149 & 0.075 & 68,293 \\
21,487 (1.20) & 1,698 & 565 & 0.079 & 70,681 \\
26,859 (1.50) & 64,727 & 11,312 & 0.090 & 107,193 \\
35,812 (2.00) & 4,672,914 & 227,327 & 0.390 & 634,485 \\
44,765 (2.50) & 171,525,844 & 1,920,313 & 3.856 & 4,264,663 \\
53,718 (3.00) & 4,255,053,119 & 10,146,659 & 25.936 & 21,021,793 \\
62,671 (3.50) & 56,032,321,965 & 30,296,461 & 90.832 & 61,413,577 \\
71,624 (4.00) & 299,488,628,619 & 50,034,563 & 152.106 & 100,812,157 \\
80,577 (4.50) & 655,582,656,971 & 45,051,145 & 138.195 & 90,332,797 \\
89,530 (5.00) & 783,461,438,008 & 15,778,153 & 41.511 & 31,627,659 \\
98,483 (5.50) & 789,354,652,325 & 578,468 & 0.841 & 1,161,531 \\
(*) 103,414 (5.78) & 789,360,053,252 & 33,578 & 0.079 & 67,157 \\
\hline
\multicolumn{5}{l}{(*): maximum cost. (here $h = f$)}
\end{tabular} 
}
\vspace{-5mm}
\end{center}
\end{table}

\subsection{Results for Simple Path Problem}

We also show another set of experimental results for simple path (self-avoiding path) problems. Table~\ref{tab:US48s} shows the results for Knuth's US map problem (simple paths from WA to ME), the same graph instance as used in Table~\ref{tab:US48}.  We may compare Tables~\ref{tab:US48} and \ref{tab:US48s}, and we can observe that in the simple path problem, the difference between the shortest path and the longest path is larger than the case of Hamiltonian path problem. In terms of the number of solutions, the simple path problem gives much larger numbers (483,366,193,920) than the case of Hamiltonian path problem. This means that enumerating  simple paths requires more time and space than enumerating Hamilton paths. Anyway, we can observe that our proposed method is much more efficient than using conventional memoizing techniques.

Table~\ref{tab:nxns} shows the results for simple path problem with the 6$\times$6 and 7$\times$7 grid graphs. In this graph, each edge has a uniformly random cost in the range [1000, 1999]. We tried to enumerate all lower cost simple paths between the two opposite corners of the grid graph. The ratio of ``\#calls'' in proposed method and conventional memoizing method becomes more significant than the results for US map example. As well as the case of Hamiltonian path problem, we can observe a drastic progress to efficiently and exactly enumerate more than billions of  all lower cost solutions.

In the above experimental results, we observed a drastic progress to efficiently and exactly enumerate more than billions of  all lower cost solutions, which have never been practicable by repeating execution of an existing optimization method.

\section{Related Work}
The idea of interval-memoizing is so simple and it is not surprising if a similar technique was already used in different problems. It is not easy to determine the really first appearance; however, as one related to decision diagrams, there is a literature by Ab{\'{\i}}o, et al. \cite{Abio2012:PB}. The work is to construct a BDD for representing a Pseudo-Boolean constraint (as a linear cost function), and then convert the BDD structure into a CNF used in a SAT solver to solve combinatorial problems. In this process, they used the interval-memoizing technique for accelerating BDD construction. The basic idea is similar to us, but our algorithm is applicable to any kind of Boolean function $f$, while Ab{\'{\i}}o's method is limited to the Pseudo-Boolean functions. Another important point is that the conventional SAT solvers are customized for exploring any one solution, not designed for enumerate a large number of solutions.

\section{Summary}
In this paper, we presented a fast method for exactly enumerating all lower cost solutions of a combinatorial problem based on ZDD operations. The computation time of the algorithm is empirically $O(|f| + |h| \log |h|)$, where $|f|$ and $|h|$ means the input and output ZDD sizes. Our experimental results show that we can construct a ZDD including billions of all lower cost solutions in a few second. Our method is drastically faster than existing enumeration systems, especially when the input/output ZDDs are well-compressed.

Our algorithm does not use any specific heuristic which is customized for Hamiltonian path problem, and the input is just a ZDD for a set of all feasible solutions. Thus, our algorithm can be used for any kind of combinatorial problems having a Boolean constraint specified by a ZDD. It would be interesting future work to apply our algorithm to various kinds of real-life problems. Our algorithm works well even if positive and negative cost values are mixed. Considering more complicated cost functions will be an interesting future direction.  Our method can be regarded as a novel search algorithm which integrates the two classical techniques, branch-and-bound and dynamic programming. This method would have many applications in various fields, including operations research, data mining, statistical testing, hardware/software system design, etc.

\end{document}